\newtheorem{theorem}{Theorem}
\newtheorem{definition}[theorem]{Definition}
\newtheorem{example}[theorem]{Example}
\providecommand{\keywords}[1]{\textbf{\textit{Keywords}} #1}
\newcommand*{\QEDA}{\hfill\ensuremath{\blacksquare}}
\begin{document}
	
	\title{Guessing With Quadratic Differential Equations}
	
	\author{Bertrand Teguia Tabuguia\\
		Nonlinear Algebra Group\\
		Max Planck Institute for Mathematics in the Sciences\\
		04103 Leipzig, Germany\\
		{\tt bertrand.teguia@mis.mpg.de}}
	
	\date{}
	
	\maketitle
	
	\begin{abstract}
		By holonomic guessing, we denote the process of finding a linear differential equation with polynomial coefficients satisfied by the generating function of a sequence, for which only a few first terms are known. Holonomic guessing has been used in computer algebra for over three decades to demonstrate the value of the guess-and-prove paradigm in intuition processes preceding proofs, as propagated in The Art of Solving (Polya, 1978). Among the prominent packages used to perform guessing, one can cite the Maple Gfun package of Salvy and Zimmermann; the Mathematica GeneratingFunctions package of Mallinger; and the Sage ore\_algebra package of Kauers, Jaroschek, and Johansson.
		
		We propose an approach that extends holonomic guessing by allowing the targeted differential equations to be of degree at most two. Consequently, it enables us to capture more generating functions than just holonomic functions. The corresponding recurrence equations are similar to known equations for the Bernoulli, Euler, and Bell numbers. As a result, our software finds the correct recurrence and differential equations for the generating functions of the up/down numbers (\url{https://oeis.org/A000111}), the evaluations of the zeta function at positive even integers, the Taylor coefficients of the Lambert W function, and many more. Our Maple implementation (\texttt{delta2guess}) is part of the FPS package which can be downloaded at \url{http://www.mathematik.uni-kassel.de/~bteguia/FPS_webpage/FPS.htm}.
	\end{abstract}
	
	\keywords{Non-holonomic function, Recurrence equation, Cauchy product formula, Bernoulli numbers, Euler numbers, Bell numbers.}
	
	\maketitle
	
	\section{Introduction}
	Roughly speaking, guessing is finding the solution to a given problem by over-constraining the shape of a hypothetical solution from finite data. By its nice closure properties, univariate $D$-finite functions, or simply holonomic functions, stayed at the center of this method (see \cite{kauers2022guessing,berthomieu2016guessing,mallinger1996algorithmic,zeilberger2007holonomic,gfun}). However, there are sequences and functions for which the holonomic system cannot help (see \cite{bell2008non}). For instance, the reciprocal of a holonomic function is generally not holonomic (see \cite{harris1985reciprocals}). Therefore sequences whose generating functions are reciprocals of holonomic functions are out of reach with the commonly used guessing. In \cite{BTphd,tabuguia2021representation}, the class of $\delta_2$-finite functions was introduced. We recall the definition below.
	
	\begin{definition}[$\delta_2$-finite functions]\label{def1}Assume $\frac{d^{-1}}{dz^{-1}} f = 1$ for any function $f:=f(z)$. Let $\delta_{2,z}$ be the operator defined as:
		\begin{align}\label{eq1}
	        &\delta_{2,z}^k(f) = \dfrac{d^{i-2}}{dz^{i-2}}f \cdot \dfrac{d^{j-2}}{dz^{j-2}}f = f(z)^{(i-2)}\cdot f(z)^{(j-2)},~\text{ where }~ (i,j) = \nu(k).\\
			&\nu(k) = (i,j) = \begin{cases} (l,l)~~ \text{ if }~~ N=k\\ (l+1,k-N)~\text{ otherwise}\end{cases},
		\end{align}
	
		\begin{align}
			&\text{ where } l=\left\lfloor \sqrt{2k+\frac{1}{4}}-\frac{1}{2}\right\rfloor,~\text{ and }~N=\dfrac{l(l+1)}{2}.
		\end{align}	
		A function $f(z)$ is called $\delta_2$-finite if there exist polynomials $P_0(z),\cdots,P_d(z)$ over a field $\mathbb{K}$ of characteristic zero, not all zero, such that
		\begin{equation}\label{eq2}
			P_d(z)\delta_{2,z}^{d+2}\left(f(z)\right)+\cdots+P_2(z)\delta_{2,z}^4\left(f(z)\right)+P_1(z)\delta_{2,z}^3\left(f(z)\right)+	P_0(z)f(z)=0.
		\end{equation}
	\end{definition}
	
	The derivative operator $\delta_{2,z}$ in Definition $\ref{def1}$ computes quadratic (degree 2) terms according to the lexicographic ordering in the differential polynomial ring $\mathcal{K}_f:=\mathbb{K}[z][f(z),f(z)',$ $f(z)'',f(z)^{(3)},\ldots]$. In \cite{stanley1980differentiably}, finiteness relates to the finite-dimensional vector space that spans the derivatives. In our case, this relates to the finitely many variables of $\mathcal{K}_f$ for a given $\delta_2$-finite function $f$. The author is currently working on more elaborations in this regard, including particular closure properties. The class of $\delta_2$-finite functions contains holonomic functions. It also contains reciprocals of those satisfying second-order holonomic differential equations (see \cite{tabuguia2021representation}). Functions satisfying first-order holonomic differential equations are holonomic by the arguments in \cite{harris1985reciprocals}, hence also $\delta_2$-finite. Some interesting $\delta_2$-finite functions are the Lambert W function, the reciprocals of the Bessel functions, and many generating functions involving numbers like the Bell, Bernoulli, and Euler numbers. The algorithm in \cite{tabuguia2021representation} outputs power series representations in the form
	
	\begin{equation}\label{eq3}
		f(z) = \sum_{n=0}^{\infty} a_n z^n, \begin{cases}a_{n+p} = \Phi(n,a_0,..,a_{n+p-1}),\\[3mm] a_0, a_1..,a_{p-1} \text{ taken from the series truncation of order }p\end{cases},
	\end{equation}
	where $\Phi: \mathbb{Z}\times\mathbb{K}^{n+p} \longrightarrow \mathbb{K}$ is a recursive relation deduced from the recurrence equation associated to a quadratic (or $\delta_2$-finite since it can be holonomic) differential equation of the form $(\ref{eq2})$ satisfied by $f(z)$.
	
	It is customary to find many applications of the study of sequences in enumerative combinatorics. These Bernoulli, Euler, and Bell numbers often occur in combinatorial problems. Thus, one thinks of an algorithm that will take a few terms $a_0,a_1,\ldots,a_N$ of a given sequence whose explicit formula might be unknown and find the corresponding $\Phi$ and its associated differential equation.
	
	Previously, Hebisch and Rubey (see \cite{hebisch2011extended}) investigated guessing with the whole family of algebraic differential equations, which of course, includes the $\delta_2$-finite one. However, we are not aware of a theoretical reason stipulating why one should investigate more arbitrary-degree differential polynomials than second-degree ones. We think that a systematic study of the relation between degree-k and degree-(k+1) differential polynomials will provide more insight into the use of algebraic differential equations. A simple reason to work on $\delta_2$-finite equations is that most of the algebraic differential equations in \cite{hebisch2011extended} have their $\delta_2$-finite analog -- tempting to illustrate that $\delta_2$-finite differential equations might suffice in many cases. Note, moreover, that lower-degree algebraic differential equations are usually desired when it comes to solving them using algebraic geometry techniques (see \cite{ngo2011rational2},\cite[Chapter 3]{michalek2021invitation},\cite{saito2013grobner}) or writing recursive formulas for power series solutions.
	
	\section{Sketch of the algorithm}
	
	We are given $N+1$ terms $a_0,\ldots,a_N$ of an ``unknown'' sequence $(a_n)_{n\geqslant0}$ over a field $\mathbb{K}$ of characteristic zero. We want to find a $\delta_2$-finite differential equation
	\begin{equation}\label{eq4}
		\sum_{k=0}^{d} P_k(z)\delta_{2,z}^{k+2}(f(z))=0,
	\end{equation}
	such that $f(z)=\sum_{n=0}^{\infty}a_nz^n$. The algorithm proceeds as follows:
	\begin{enumerate}
		\item Fix a degree bound $m\in\mathbb{N}$ for the polynomial coefficients $P_k(z)$, $0\leq k\leq d$. We use the default value $m=2$ since this suffices to correctly guess the equations for the Bernoulli, Euler, and Bell numbers.
		\item For $3\leq d \leq \lceil \left(N+1\right)/\left(m+1\right) \rceil$
		\begin{enumerate}
			\item Compute the recurrence relation of the power series solutions of the ansatz
			\begin{equation}
				\sum_{k=0}^{d}\left(c_{k,0}+\cdots+c_{k,m}z^m\right)\delta_{2,z}^{k+2}(f(z)),
			\end{equation}
			where $c_{k,i}, k=0\ldots d, i=0\ldots m$ are unknown coefficients.
			\item Evaluate the recurrence equation for $n=0,\ldots,(m+1)(d+1)-1$, assuming $a_n=0$ for $n\leq-1$, and using the given first terms. This yields a linear system for the unknowns $c_{k,i}$'s that we solve using Maple linear system solver (\texttt{SolveTools:-Linear}) which implements many efficient algorithms for linear systems solving. Denote by $S$ the set of solutions.
			\item If $S$ is not empty and $(m+1)(d+1)<N+1$, then use the $N+1-(m+1)(d+1)$ remaining initial terms to verify the solution found. Set $S$ to the empty set if the verification fails. Interrupt the loop (of step 2 as a whole) if the verification succeeds.
		\end{enumerate}
		\item Return the differential equation and the recurrence equation corresponding to the solution in $S$ if $S\neq \emptyset$; otherwise return FAIL (not successful).
	\end{enumerate}	
	
	\section{Examples}
	We implemented our algorithm in Maple with the name \texttt{delta2guess} (or \texttt{FPS:-delta2guess}) as a procedure of the \texttt{FPS} package available at \url{http://www.mathematik.uni-kassel.de/~bteguia/FPS_webpage/FPS.htm}. The procedure takes a list of finitely many numbers as input and returns a list of two elements in the successful case. These are a generic differential equation for the generating function and a generic recurrence equation for the input numbers, understood as the corresponding power series coefficients. 
	
	\begin{example} Maple does not, by default, compute $\zeta(50)$, where $\zeta$ denotes the Riemann Zeta function. Therefore this example is more interesting to test our implementation. Using our guessed recurrence equation, we can compute more terms of the sequence by no means of the Zeta function. The verification of the correctness is easier since $(\zeta(2n))_{n\geqslant1}$ has a well-known explicit formula in terms of the Bernoulli numbers.
		
		\begin{footnotesize}
			\begin{lstlisting}
				> L:=[seq(Zeta(2*j),j=1..24)]: L[1..5]
			\end{lstlisting}
			\begin{dmath}\label{(4)}
				\left[\frac{\pi^{2}}{6},\frac{\pi^{4}}{90},\frac{\pi^{6}}{945},\frac{\pi^{8}}{9450},\frac{\pi^{10}}{93555}\right]
			\end{dmath}
			\begin{lstlisting}
				> FPS:-delta2guess(L)
			\end{lstlisting}
			\begin{dmath}\label{(5)}
				\left[\textit{\_C} n \left(n +1\right) a \! \left(n +1\right)+\textit{\_C0} \left(n -1\right) n a \! \left(n \right)-2 \textit{\_C} \left(\moverset{n -1}{\munderset{k =0}{\textcolor{gray}{\sum}}}\! \left(k +1\right) a \! \left(k +1\right) a \! \left(n -1-k \right)\right)-2 \textit{\_C0} \left(\moverset{n -2}{\munderset{k =0}{\textcolor{gray}{\sum}}}\! \left(k +1\right) a \! \left(k +1\right) a \! \left(n -2-k \right)\right)-\textit{\_C} \left(\moverset{n}{\munderset{k =0}{\textcolor{gray}{\sum}}}\! a \! \left(k \right) a \! \left(n -k \right)\right)-\textit{\_C0} \left(\moverset{n -1}{\munderset{k =0}{\textcolor{gray}{\sum}}}\! a \! \left(k \right) a \! \left(n -1-k \right)\right)+\frac{5 n \textit{\_C0} a \! \left(n \right)}{2}+\frac{5 \textit{\_C} \left(n +1\right) a \! \left(n +1\right)}{2} \hiderel{=} 0,
				\\
				\left(-\textit{\_C0} z -\textit{\_C} \right) y \! \left(z \right)^{2}+\left(-2 \textit{\_C0} \,z^{2}-2 \textit{\_C} z \right) y \! \left(z \right) \left(\frac{d}{d z}y \! \left(z \right)\right)+\left(\textit{\_C0} \,z^{2}+\textit{\_C} z \right) \left(\frac{d^{2}}{d z^{2}}y \! \left(z \right)\right)+\left(\frac{5 \textit{\_C}}{2}+\frac{5 \textit{\_C0} z}{2}\right) \left(\frac{d}{d z}y \! \left(z \right)\right)\hiderel{=}0\right]
			\end{dmath}
		\end{footnotesize}
		
		Substituting $\_C=2$ and $\_C0=0$ gives the expected guess. That is equivalent to taking an element of the basis of the $\mathbb{K}$-module of all solutions. \QEDA
		\end{example} 
		
		Therefore we obtain the following theorem.
		
		\begin{theorem}
			The generating function of the sequence defined by
			\[a(n):=\zeta(2n+2)=(-1)^{n}(2\pi)^{2n+2}\frac{B_{2n+2}}{2\,(2n+2)!}, ~n\geq 0 \]
			where $B_n$ is the $n^{\text{th}}$ Bernoulli number, satisfies the recurrence equation
			\begin{equation}\label{eq5}
				\left(2 n + 5\right) \left(n +1\right) a \! \left(n +1\right) - \! 2 \left(\sum_{k=0}^{n}\! 2 \! \left(k +1\right) a \! \left(k +1\right) a \! \left(n -1-k \right) + \! a \! \left(k \right) a \! \left(n -k \right)\right) = 0,
			\end{equation}
			and, equivalently, its generating function is a solution of the differential equation
			\begin{equation}\label{eq6}
				2 z \left(\frac{d^{2}}{d z^{2}}y \! \left(z \right)\right)+5 \frac{d}{d z}y \! \left(z \right)-4 z y \! \left(z \right) \left(\frac{d}{d z}y \! \left(z \right)\right)-2 y \! \left(z \right)^{2}=0.
			\end{equation}
		\end{theorem}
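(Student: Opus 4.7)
The plan is to verify the theorem by reducing to the classical cotangent identity
\[
\pi z \cot(\pi z) \;=\; 1 - 2 \sum_{k\geq 1} \zeta(2k)\, z^{2k},
\]
which is the standard generating function for the even zeta values. Writing $h(z) := \pi z \cot(\pi z)$ and unfolding the definition $a_n = \zeta(2n+2)$ shows that $h(z) = 1 - 2 z^2 y(z^2)$, so a differential equation satisfied by $y$ may be obtained by transporting one satisfied by $h$.

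The first step is to produce an elementary ODE for $h$. Starting from $\cot'(u) = -1-\cot^2(u)$ and differentiating $h(z)=\pi z \cot(\pi z)$ directly, one gets
\[
z\, h'(z) \;=\; h(z) - \pi^2 z^2 - h(z)^2.
\]
Next I would substitute $h(z)=1-2z^2 y(z^2)$ into this relation. After simplification (the constant terms cancel) and the change of variable $w=z^2$, I expect to obtain the first-order equation
\[
4w\, y'(w) + 6\, y(w) - 4w\, y(w)^2 - \pi^2 \;=\; 0.
\]
Differentiating this once with respect to $w$ eliminates the transcendental constant $\pi^2$ and, after dividing by $2$, yields exactly the differential equation (\ref{eq6}). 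This establishes that $y(z)$ is a solution of (\ref{eq6}); uniqueness among formal power series with the correct initial term $a_0=\zeta(2)=\pi^2/6$ follows because (\ref{eq6}) determines $a_{n+1}$ from $a_0,\dots,a_n$ for every $n\geq 0$.

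The passage from (\ref{eq6}) to the recurrence (\ref{eq5}) is then purely formal: one extracts the coefficient of $z^n$ on each side, translating $z y''$ and $y'$ into $(n-1)n\, a_n$ and $(n+1) a_{n+1}$ respectively, and applying the Cauchy product formula to $y^2$ and $zyy'$ to produce the two convolution sums that appear in (\ref{eq5}). This is the same bookkeeping that underlies the reverse implication used by \texttt{delta2guess}, so the two equations in the theorem are genuinely equivalent statements.

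The only subtle point is the derivation of the intermediate first-order equation for $y$: one has to check carefully that substituting $h=1-2z^2 y(z^2)$ makes the constant and linear-in-$z^2$ terms combine cleanly so that dividing by $z^2$ is legitimate, and that after differentiating in $w$ the resulting equation has no spurious solutions that the initial data $a_0=\zeta(2)$ could select incorrectly. Once that algebraic reduction is done correctly, everything else is mechanical.
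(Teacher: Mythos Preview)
Your proposal is correct and follows essentially the same route as the paper: both identify the generating function via the classical cotangent identity $\pi z\cot(\pi z)=1-2\sum_{k\ge1}\zeta(2k)z^{2k}$, so that $y(z)=\dfrac{1-\pi\sqrt{z}\cot(\pi\sqrt{z})}{2z}$, and then check that this explicit function satisfies~(\ref{eq6}). The only difference is one of detail: the paper simply asserts that this closed form satisfies~(\ref{eq6}) and stops, whereas you actually carry out the verification by first deriving the first\nobreakdash-order relation $4w\,y'(w)+6\,y(w)-4w\,y(w)^2-\pi^2=0$ from $z\,h'(z)=h(z)-\pi^2 z^2-h(z)^2$ and then differentiating once to eliminate $\pi^2$; you also spell out the coefficient extraction giving~(\ref{eq5}). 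So your argument is a fleshed\nobreakdash-out version of the paper's, not a genuinely different approach.
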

		\begin{proof}
			The generating function of $(a(n))_{n\geqslant0}$ can be deduced from the well-known relation
			\begin{equation}\label{eq7}
				-\frac{\pi\,z}{2}\cot(\pi\,z)+\frac{1}{2} = \sum_{n=1}^{\infty}\zeta(2n) z^{2n}.
			\end{equation}
			Therefore we get the generating function $\frac{1-\pi \cot{\left( \pi \sqrt{z}\right) } \sqrt{z}}{2 z}$ which satisfies $(\ref{eq6})$.
		\end{proof}
	
	\begin{example} The up/down or ``zig-zag'' numbers from \url{https://oeis.org/A000111} have the exponential generating function $\tan\left(\frac{z}{2}+\frac{\pi}{4}\right)$. This can be recovered by our implementation. Indeed, the values from OEIS, each divided by the factorial of their index (starting from zero), lead to the following guess.
		\begin{dmath}\label{(20)}
			\left[\left(n +1\right) \left(n +2\right) a \! \left(n +2\right)-\left(\moverset{n}{\munderset{k =0}{\textcolor{gray}{\sum}}}\! \left(k +1\right) a \! \left(k +1\right) a \! \left(n -k \right)\right) \hiderel{=} 0,
			\frac{d^{2}}{d z^{2}}y \! \left(z \right)-y \! \left(z \right) \left(\frac{d}{d z}y \! \left(z \right)\right)\hiderel{=}0\right]
		\end{dmath}
		Solving the differential equation and matching the first two Taylor coefficients with the first two terms of the sequence give exactly the desired generating function. Note that using \texttt{FPS:-FindQRE} and \texttt{FPS:-QDE}, one can also compute the same recurrence equation and differential equation from $\tan\left(\frac{z}{2}+\frac{\pi}{4}\right)$, respectively.  \QEDA
	\end{example}

\end{document}